\definecolor{defblue}{rgb}{0.121,0.47,0.705}
\definecolor{lipicsblue}{rgb}{0.08235294118,0.3098039216,0.537254902}
\let\emph\relax
\DeclareTextFontCommand{\emph}{\color{defblue}\em}
\newcommand{\restateref}[1]{\IfAppendix{\hyperref[#1]{$\star$}}{\hyperref[#1*]{$\star$}}}
\newcommand{\ver}{arxiv}
\newcommand{\arxapp}[2]{\ifthenelse{\equal{\ver}{arxiv}}{#2}{#1}}
\newtheorem{problem}{Problem}
\newtheorem{property}{Property}
\DeclareMathOperator{\ourfalse}{\texttt{false}}
\DeclareMathOperator{\ourtrue}{\texttt{true}}
\newcommand{\hplus}{\ensuremath{H^+}\xspace}
\newcommand{\hminus}{\ensuremath{H^-}\xspace}
\newcommand{\problemnamek}[1]{$#1$-\textsc{PIP}\xspace}
\newcommand{\problemname}{\problemnamek{1}}
\renewcommand{\Cref}[1]{\cref{#1}}
\crefname{corollary}{Corollary}{Corollaries}
\title{\boldmath On $k$-Plane Insertion into Plane Drawings}
\titlerunning{\boldmath On $k$-Plane Insertion into Plane Drawings}
\authorrunning{J. Katheder, P. Kindermann, F. Klute, I. Parada, I. Rutter} 
\keywords{Graph drawing, edge insertion, k-planarity}
\author{Julia Katheder}{Universität Tübingen, Germany}{julia.katheder@uni-tuebingen.de}{https://orcid.org/0000-0002-7545-0730}{Funded by the Deutsche Forschungsgemeinschaft (DFG) -- 364468267.}
\author{Philipp Kindermann}{Universität Trier, Germany}{kindermann@uni-trier.de}{https://orcid.org/0000-0001-5764-7719}{}
\author{Fabian Klute}{Universitat Politècnica de Catalunya, Spain}{fabian.klute@upc.edu}{https://orcid.org/0000-0002-7791-3604}{F. K. is supported by a “María Zambrano grant for attracting international talent” and by grant PID2019-104129GB-I00 funded by MICIU/AEI/10.13039/501100011033.}
\author{Irene Parada}{Department of Mathematics, Universitat Politècnica de Catalunya, Spain}{irene.parada@upc.edu}{https://orcid.org/0000-0003-2401-8670}{I. P. is a Serra H\'unter Fellow. Partially supported by grant 2021UPC-MS-67392 funded by the Spanish Ministry of Universities and the European Union (NextGenerationEU) and by grant PID2019-104129GB-I00 funded by MICIU/AEI/10.13039/501100011033.}
\author{Ignaz Rutter}{University of Passau, Passau, Germany}{rutter@fim.uni-passau.de}{https://orcid.org/0000-0002-3794-4406}{Funded by the Deutsche Forschungsgemeinschaft (DFG) -- 541433306.}
\begin{document}

\maketitle

\begin{abstract}
    We introduce the $k$-Plane Insertion into Plane drawing (\problemnamek{k}) problem: given a plane drawing of a planar graph $G$ and a set $F$ of edges, insert the edges in $F$ into the drawing such that the resulting drawing is $k$-plane. In this paper, we show that the problem
    is \NP-complete for every $k\ge 1$, even when $G$ is biconnected and the set $F$ of edges forms a matching or a path. On the positive side, we
    present a linear-time algorithm for the case that $k=1$ and $G$ is a triangulation.
\end{abstract}

\section{Introduction}

Inserting edges into planar graphs is a long-studied problem in graph drawing.
Most commonly, the goal is to find a way to insert the edges while minimizing the number of
crossings and maintaining the planarity of the prescribed subgraph.
This problem is a core step in the planarization method to find 
graph drawings with few crossings~\cite{DBLP:conf/gd/MutzelZ99}.
Gutwenger et al.~\cite{DBLP:journals/algorithmica/GutwengerMW05} have studied the case of a single edge.
For multiple edges, the picture is more complicated.
In case the edges are all incident to one vertex previously not present in the graph, the problem can be solved in polynomial time~\cite{DBLP:conf/soda/ChimaniGMW09}.
However, the general problem is \NP-hard even when the given drawing is fixed and
the underlying graph is biconnected~\cite{Mutzel1999,Ziegler01}.
Assuming a fixed drawing, Hamm and Hlin\v{e}n\'{y} presented an \FPT-algorithm parameterized by the number of crossings~\cite{hamm_et_al:LIPIcs.SoCG.2022.46}.
Finally, Chimani and Hlin\v{e}n{\'{y}}~\cite{DBLP:journals/jgaa/ChimaniH23} 
gave an \FPT-algorithm for the fixed and variable embedding settings
with the number of inserted edges as a parameter.

In this paper, we take a slightly different viewpoint and
do not restrict the overall number of created crossings,
but instead their structure.
Moreover, we focus on the case when the drawing of the given planar graph $G$ is fixed.
Then our goal is, given a plane drawing $\Gamma$ of $G$
and a set $F$ of edges not present in $G$, 
to find a $k$-plane drawing containing $\Gamma$ as a subdrawing plus the edges of $F$.
Here, a \emph{$k$-plane drawing} of a graph is one in which no edge is crossed
more than $k$ times.
The class of \emph{$k$-planar graphs},
which are those admitting a $k$-plane drawing, is widely studied in graph drawing~\cite{dlm-sgdbp-19,DBLP:books/sp/20/HT2020}.

\begin{figure}[b]
    \centering
    \includegraphics[width=\linewidth]{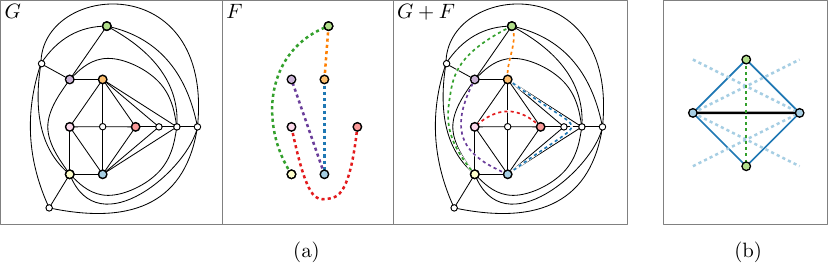}
    \caption{(a) The \problemname problem: a plane graph $G$, a set $F$ of edges, and a 1-plane drawing of $G+F$. (b) In a triangulation, an edge in $G$ (bold) can only be an option for a single edge in $F$ (green) and clashes with at most four other options (blue).}
    \label{fig:teaser}
\end{figure}

\begin{problem}[$k$-Plane Insertion into Plane drawing (\problemnamek{k})]
    Let $\Gamma$ be a plane drawing of a graph $G = (V,E)$ and let $H = (V, E')$ be its complement. Given $G$, $\Gamma$ and
    a set $F \subseteq E'$ of edges,
    find a $k$-plane drawing of the graph $(V, E \cup F)$
    that contains $\Gamma$ as a subdrawing.
\end{problem}

For any fixed $k\in\mathbb N$, 
an \emph{instance} $(G,\Gamma,F)$ of \problemnamek{k} consists of a graph $G$,
a plane drawing $\Gamma$ of $G$, and a set of edges $F$ from the complement of $G$.

\subparagraph*{Our contribution.}
In addition to introducing this problem, we give two results.
In \Cref{sec:algorithm}, we present an $O(|V|)$ algorithm for \problemname
for the case that $G$ is a triangulation.
To accomplish this, we first reduce the number of possible ways one edge can be 
inserted into the given drawing to at most two per edge in $F$ and
then use a $2$-SAT formulation to compute a solution if possible.
In \Cref{sec:reduction}, we show  
that \problemnamek{k} is \NP-complete for every $k\ge 1$ even
if $G$ is biconnected and the edges in $F$ form a path or a matching.

\subparagraph*{Related work.}
\problemnamek{k} is related to the problem of extending a partial drawing of a 
graph to a drawing of the full graph.
Usually, the goal in such problems is to maintain certain properties of the
given drawing.
For example, in works by Angelini et al.~\cite{AngeliniBFJKPR15}, Eiben et al.~\cite{eiben_mfcs_2020,EibenGHKN20}, Ganian et al.~\cite{DBLP:conf/icalp/GanianHKPV21}, or Arroyo et al.~\cite{DBLP:conf/gd/ArroyoDP19,DBLP:journals/dcg/ArroyoKPVSW23} the input is a plane, 1-plane, $k$-plane, or simple drawing, respectively, and 
the desired extension must maintain the property of being plane, 1-plane,  $k$-plane, or simple.
Restrictions of the drawing such as it being straight-line~\cite{ext_straight_06},
level-planar~\cite{PartialConstrainedLevel_Brueckner_2017},
upward~\cite{lozzo_compgeo_2020}, or
orthogonal~\cite{angeliniExtendingPartialOrthogonal2020}
have been explored.
Other results 
consider the number of bends~\cite{cfglms-dpespg-15} or 
assume that the partially drawn subgraph is a cycle~\cite{cegl-dgpwpofpa-12,ExtendingConvexPartial_Mchedlidze_2015}.

\section{\boldmath\problemname: Efficiently inserting edges into a triangulation}
\label{sec:algorithm}

We assume standard notation and concepts from graph theory; compare, e.g.,~\cite{Diestel}.
Given an instance $(G, \Gamma, F)$ of \problemname, often times
an edge $e\in F$ can be inserted into $\Gamma$ in different ways. 
Note that $e$ cannot be inserted without crossings in a triangulation.
An \emph{option} for $e$ is an edge $\gamma$ of $G$ such that $e$ can be inserted into $\Gamma$ crossing only $\gamma$.
Note that in a triangulation, a pair of adjacent faces uniquely 
defines an edge $\gamma$ that must be crossed if $e$ is inserted into said pair of faces. 
Thus, we also use the term option to refer to such a pair of faces. 
An option for $e$ is \emph{safe} if, in case the instance admits a solution, there is a solution in which $e$ is inserted according to this option. 
Two options for two edges $e$ and $e'$ of $F$ \emph{clash} if 
inserting both $e$ and $e'$ according to these options violates 1-planarity.
Examples of safe options are those of edges 
with a single option and an option without clashes. An immediate solution can be found if each edge in $F$ has a non-clashing option. However, it is not sufficient for each edge in $F$ to have a safe option in order to find a solution, e.g., in the case that two single options are clashing.
Observe that in a triangulation, each edge of $\Gamma$ can only be an option for one edge of $F$ and clashes with at most four other options; see~\Cref{fig:teaser}(b).
Further, for a triangulation, we have the following property where a \emph{blocking cycle} in the drawing forces an edge to have only clashing options; see~\Cref{fig:triangulation}(a).
\begin{property} \label{prop:blocking}
    Let $e = (u,v)$ be an edge in $F$ and let $\sigma_i = (x,w)$ be one of its options. For an edge $e'=(w,y) \in F$ having at least one clashing option with $\sigma_i$, there is other non-clashing option with $\sigma_i$, if there is a cycle $C$ in $G$ such that $u,v \in C$ and $x,w,y \notin C$, .
\end{property}
\begin{proof}
    The cycle $C_{\sigma_i} = (u,x,v,w,u)$  and $C$ share only the vertices $u$ and $v$ and since $\Gamma$ is plane, $C$ and $C_{\sigma_i}$ partition $\Gamma$ into four regions, where the edges in $C$ and $C_{\sigma_i}$ constitute the borders of said regions. The edge $e'$ has an option that clashes with $\sigma_i$, i.e., this option is an edge in $C_{\sigma_i}$. Then the endpoint $y$ of $e'$ lies in the region bordered by edges in $C$ and $C_{\sigma_i}$. By $1$-planarity, $e'$ cannot have an option not clashing with $\sigma_i$, as this would require crossing $C$ twice.
\end{proof}

\begin{theorem}
    \problemname can be solved in linear time for instances~$(G,\Gamma,F)$ where~$G$ is a triangulation.    
\end{theorem}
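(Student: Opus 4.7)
The plan is to solve \problemname by reducing it to 2-SAT, which is solvable in linear time. The algorithm proceeds in four phases. First, we enumerate the options for each edge $e \in F$. Since each edge of $G$ is an option for at most one edge of $F$ (as noted just before the theorem), the total number of options is $O(|V|)$, and they can be listed in $O(|V|)$ time by a single pass over the triangular faces of $\Gamma$, checking for each shared edge $\gamma=(x,w)$ of two triangles whether the apex vertices $u,v$ satisfy $(u,v)\in F$.

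Next, we build the clash graph on options. Because each option clashes with at most four others, this graph has $O(|V|)$ edges and can be constructed in linear time. We then apply standard unit propagation to commit to safe options: whenever an edge $e\in F$ has a single remaining option, or an option that clashes with none of the remaining options of other edges, we fix $e$ to that option and discard its alternatives. A queue-based implementation completes this propagation in $O(|V|)$ time.

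The main obstacle is the third phase: reducing every remaining edge to at most two options. The plan is to invoke \Cref{prop:blocking}. If an edge $e=(u,v)$ still has three or more options lying along the triangle strip between $u$ and $v$, then for any \emph{interior} option $\sigma_i$ the apex vertices of the two outermost options (together with $u,v$) form a cycle $C$ through $u$ and $v$ that avoids the apex of $\sigma_i$ as well as, for every edge $e'$ clashing with $\sigma_i$, the far endpoint of $e'$. \Cref{prop:blocking} then guarantees that $e'$ retains a non-clashing alternative, so committing $e$ to $\sigma_i$ is safe. After iterating this step, every remaining edge carries at most two options (its outermost ones). The delicate point is a careful case analysis of the triangle-strip geometry that ensures the required blocking cycle always exists, also handling the degenerate case in which $u,v$ have no further common neighbors outside the strip.

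Finally, we formulate the reduced instance as 2-SAT: introduce one Boolean variable per remaining edge selecting between its two surviving options, and one 2-clause per pair of clashing options forbidding their joint selection. Both the number of variables and clauses are $O(|V|)$, and 2-SAT is solvable in linear time. A satisfying assignment combined with the committed options yields the required 1-plane drawing, while unsatisfiability certifies that no such drawing exists.
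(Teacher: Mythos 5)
Your overall pipeline (enumerate the $O(|V|)$ options, propagate single-option and clash-free choices, shrink every edge of $F$ to at most two options, then solve a linear-size 2-SAT instance) is the same as the paper's, and phases 1, 2 and 4 are fine. The gap is in phase 3, and it is twofold. First, you apply \Cref{prop:blocking} in the wrong direction (the wording of the property statement is admittedly misleading, but its proof and the surrounding text make the intent clear): the blocking cycle $C$ through $u$ and $v$ avoiding $x,w,y$ traps $y$ in a region bounded by $C$ and the quadrilateral $u,x,v,w$, so any insertion of $e'=(w,y)$ avoiding a clash with $\sigma_i$ would have to cross $C$ twice. Hence $e'$ has \emph{no} non-clashing option --- not, as you claim, a guaranteed non-clashing alternative. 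Consequently, an interior option $\sigma_i$ that clashes with anything can never be used in any solution and must be \emph{discarded}; it is only safe to commit to it when it is clash-free. Committing $e$ to a clashing interior option, as you propose, strips a neighbouring edge of $F$ of all its options and makes your algorithm answer \texttt{no} on yes-instances. (This also makes your phase 3 internally inconsistent: if interior options were safely committable, the surviving edges would not be left with their two outermost options.)

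Second, the hope that ``a careful case analysis ensures the required blocking cycle always exists'' is false, and this is precisely the delicate case you defer rather than resolve. When $e$ has exactly three consecutive options (or four cyclically consecutive ones), the far endpoint $y$ of an edge clashing with the middle option can coincide with an endpoint of an outer option of $e$, so every candidate cycle through $u$ and $v$ passes through $y$ and \Cref{prop:blocking} is not applicable; moreover, in this configuration the middle option can genuinely be required by some solutions, so it cannot simply be dropped either. The paper handles this by observing that the situation forces an octahedron on $\{u,v,w,x,y,z\}$, inserting first the edges of $F$ with exactly one endpoint in this set (which have at most one option) and then solving a constant-size subinstance on that octahedron. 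Without an argument of this kind, your reduction to at most two options per edge, and hence the final 2-SAT step, does not go through.
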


\begin{proof}
    The idea is to preprocess the instance until we are left with a set $F' \subseteq F$ of edges with two options each. 
    The resulting instance can then be solved using a 2SAT formula. 
    We begin by computing all options for every $e\in F$, resulting in $O(|V|)$ options, since each option is an edge in the plane drawing $\Gamma$, crossed by a unique edge in $F$.
    Since $\Gamma$ is plane, we can get the triangles incident to each $v \in V$ in cyclic order and also the options for edges in $F$ incident to $v$. 
    Hence, we get the overall $O(|V|)$ options for edges in $F$ in $O(|V|)$ time.
    For an edge $(u,v) \in F$, $u,v \in V$, with two or more options
    we say that two options are \emph{consecutive} if
    the corresponding faces are consecutive in the cyclic order around $u$ (or $v$); see the options for $(u,v)$ in~\Cref{fig:triangulation}(c) for an illustration.
    We say a set of options is \emph{cyclically consecutive} if 
    the corresponding edges induce a cycle in $G$; see the options for $(u,v)$ in~\Cref{fig:triangulation}(d).
    Whenever an edge $e$ has no option left, we stop and output \texttt{no} and 
    if $e$ has exactly one option left, we insert it into $\Gamma$. 
    Every time we insert an edge, we need to remove at most four options of other edges plus all the options of the just inserted edge. 
    Consider an edge $e = (u,v) \in F$, $u,v \in V$,
    that has three or more options.
    We consider three cases.
    \begin{figure}
    \centering
    \includegraphics[page=1,width=\linewidth]{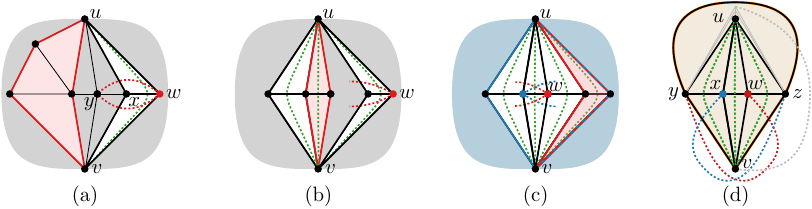}
    \caption{Cases with three or more options in a triangulation.}
    \label{fig:triangulation}
    \end{figure}
    \begin{enumerate}[labelwidth=!, labelindent=0pt, label=(\alph*)]
        \item There are at least three options for $e$, and at least one of them, $\sigma_i$, is not consecutive to any of the other two; see rightmost option in \Cref{fig:triangulation}(b).
        We claim that $\sigma_i$ is either safe or never possible in a solution. 
        If $\sigma_i$ is not clashing with any other option, it is safe and we add it. 
        Otherwise, let $w$ and $x$ be the two endpoints of $\sigma_i$. 
        Option $\sigma_i$ can only be clashing with two options for edges in $F$ incident to $w$ and two options for edges in $F$ incident to $x$. Moreover, any option for those edges clashes with $\sigma_i$. To see this, consider the cycle $C$ formed by $u$, $v$, and the endpoints of another option for $e$ other than $\sigma_i$ (illustrated in red in \Cref{fig:triangulation}(b)). By $\sigma_i$ being is a non-consecutive option, $C$ fulfills~\Cref{prop:blocking} for clashing options of edges in $F$ incident to $x$ and $w$.
        \item There are at least four consecutive non-cyclic options for $e$; see \Cref{fig:triangulation}(c). 
        Let $\sigma_i$ be one of the inner options. Then, similar to the previous case, we can find a blocking cycle as follows. If the option clashes with the rightmost (leftmost) option, we can find a blocking cycle formed by $u$, $v$ and the endpoints of the leftmost (rightmost) option. Otherwise, the cycle formed by $u$, $v$ and the first and last endvertex in the path formed by the consecutive options of $e$ forms a cycle fulfilling~\Cref{prop:blocking} for the endvertices of $\sigma_i$.
        \item There are three consecutive or four cyclically consecutive options for $e$; see \Cref{fig:triangulation}(d). 
        Consider the middle option $\sigma_i$ (or any option if there were four). 
        If it is safe, we just add it. 
        Else, let $w$ and $x$ be the endpoints of $\sigma_i$
        and $y,z$ the other endpoints of options for $e$. 
        Assume, w.l.o.g., that $\sigma_i$ clashes with an option of an edge $e_w$ incident to $w$ and to vertex $y$. 
        For $\sigma_i$ to be a possible option in a solution, $e_w$ must have an option that does not clash with it. 
        There is only one possibility, and it implies that $v,y,z$ or $u,y,z$ form a triangle. 
        Assume, w.l.o.g., the former, so $(y,z)$ is an edge in $\Gamma$.
        Let $V_\diamond$ be the set of vertices $\{u,v,w,x,y,z\}$ and $G_{\diamond}$ the octahedron subgraph of $G$ induced by~$V_\diamond$.

        Edges in $F$ with exactly one endpoint in $V_\diamond\setminus \{u\}$ have at most one option.
        Thus, we can insert them first and see whether we are still in Case~(d). 
        Edges incident to $u$ and to a vertex not in $V_\diamond$ cannot clash with any option of an edge between vertices in $V_\diamond$. 
        Thus, we can solve the constant-size subinstance consisting of inserting such edges into $G_{\diamond}$ independently, taking into account the single-option edges that we might have inserted.
    \end{enumerate}

Once each edge has exactly two options  
we create a 2SAT formula containing
one variable per option and
clauses that ensure exactly one option per edge in $F'$ and exclude clashes. 
This formula has size $O(|V|)$ and
is satisfiable iff the original instance has a solution.
\end{proof}

\section{\texorpdfstring{\boldmath \problemnamek{k_{\ge1}}}{\problemnamek{k}}: Inserting a path or a matching is \texorpdfstring{\NP}{NP}-complete}
\label{sec:reduction}
The membership of \problemnamek{k_{\ge1}} in NP is straightforward; we prove NP-hardness by reduction from \textsc{Planar Monotone 3-SAT}.
Let $\phi$ be a Boolean formula in CNF with variables $V = \{x_1, \dots, x_n\}$ and clauses $C = \{c_1, \dots, c_m\}$. 
Each clause has at most three literals and is either \emph{positive} (all literals are positive) or \emph{negative} (all literals are negative). 
Furthermore, there is a rectilinear representation $\Gamma_\phi$ of the variable-clause incidence graph of $\phi$ such that all variables and clauses are depicted as axis-aligned rectangles or \emph{bars} connected via vertical segments and all variables are positioned on the $x$-axis, all positive clauses lie above, and all negative clauses lie below the $x$-axis; see \Cref{fig:planar-monotone-3-sat} for an example. 
This problem is known to be NP-complete~\cite{DBLP:journals/ijcga/BergK12,DBLP:journals/siamdm/KnuthR92}.
The bars in $\Gamma_\phi$ 
can be layered decreasingly from top to bottom. 
We set the layer of the variables as layer zero, and insert two empty layers, one directly above and one below the variable layer. We denote by $L(c)$ the layer of clause $c$; see~\Cref{fig:planar-monotone-3-sat}.

\begin{figure}[t]
    \centering
    \includegraphics[page=2,scale=0.8]{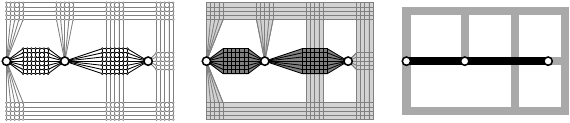}
    \caption{Rectilinear representation of the variable-clause incidence graph of a \textsc{Planar Monotone 3-SAT} instance.}
    \label{fig:planar-monotone-3-sat}
\end{figure}

In the following, starting from $\Gamma_\phi$, we construct a graph $G = (V, E)$, its plane drawing $\Gamma$, and the edge set $F$, which will be inserted into $\Gamma$ in a specific way. We start with the case of $F$ forming a path (see~\Cref{thm:biconnected}) and describe the changes to our construction for $F$ being a matching afterwards (see~\Cref{cor:matching}). 

We denote by $\hplus$ the graph consisting of an axis-aligned $(k+1)\times(k+1)$-vertex grid, where all the $k+1$ vertices on the left side of the grid are connected to a vertex $u$, while all vertices on the right are connected to a vertex $v$. We create chains of copies of $\hplus$, that are connected via the vertices $u,v$. Further, we denote by $\hminus$ the axis-aligned grid graph consisting of $(k-1)\times(k-1)$ vertices. In our construction of $G$, we create grids of $\hminus$ graphs, by connecting two opposing vertical or horizontal sides of their respective vertex grid via $k-1$ non-crossing edges. The grid construction can also be connected to copies of $\hplus$ via $k-1$ non-crossing edges, leaving out the corner vertices of the $\hplus$ vertex grid. If it is necessary to connect a single vertex $v$ to an $\hminus$, we connect $v$ via a fan of $k-1$ edges to one side of the vertex grid. Note that for the case of $k=1$, structures parameterized by $k-1$ such as $\hminus$ are meant to disappear from the construction. \Cref{fig:representations}(a) shows a structure consisting of multiple copies of $\hplus$ and $\hminus$ and their schematic representation used in more complex figures.
We say that an edge $e \in F$ is \emph{$\ell$-spanning} if there are $\ell$ different copies of $\hplus$ in the chain between its endpoints.

\begin{figure}[tbh]
    \centering
    \includegraphics[page=1,width=0.8\linewidth]{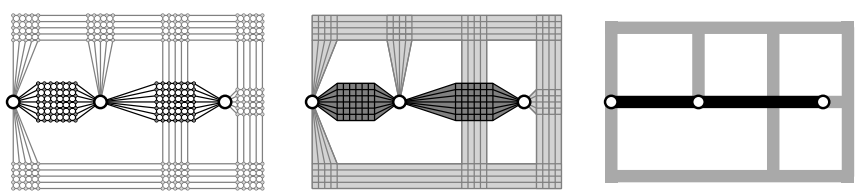}
    \caption{Different representations used in the drawings of our construction. (Left) every vertex and every edge, (middle) a simplification, and (right) a highly abstracted representation.}
    \label{fig:representations}
\end{figure}

\begin{figure}[t]
\centering
\includegraphics[page=3,width=\linewidth]{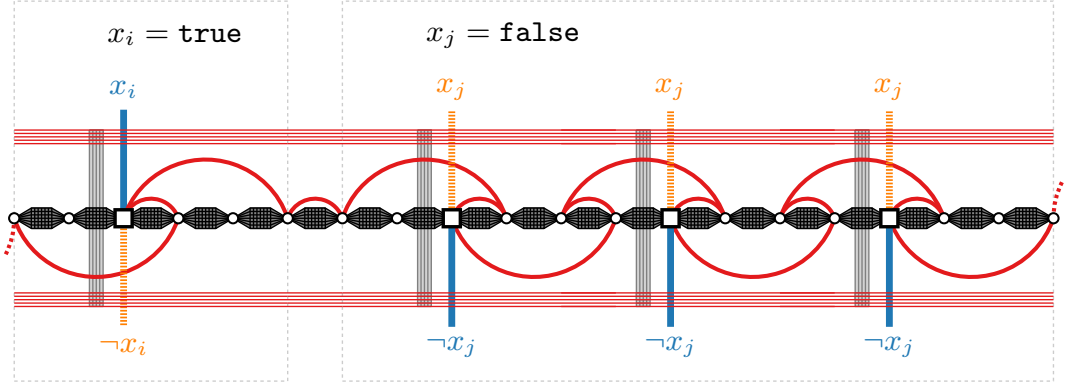}
\caption{Drawing of the variable gadget illustrating~\Cref{lem:variable}.}
\label{fig:variable-gadget}
\end{figure}

\subparagraph{The variable gadget.} 
We replace each bar of a variable $x$ in $\Gamma_\phi$ by a variable 
gadget which consists of an $\hplus$-chain of $4a + 1$ copies,
where $a$ is the maximum over the number of positive and negative occurrences of $x$ in $\phi$.
Let $u_1,\ldots,u_{4a + 2}$ be the vertices that join the copies of $\hplus$ as well as the two unjoined vertices of the first and last $\hplus$ copy in the chain, from left to right.
Moreover, we mark for $i \in \{0,\ldots,a-1\}$ the vertices $u_{4i + 3}$ as
\emph{variable endpoints} (squares in~\Cref{fig:variable-gadget}).
Each such vertex is incident to two \emph{literal edges}, which are connecting the variable gadgets to adjacent layers and
encode the truth value of the respective variables.
We call a literal edge exiting its variable endpoint upwards (downwards)  
\emph{positive} (\emph{negative}). 

For every copy of $\hplus$ with position $4i + 2$, $0 \leq i < a$, we connect the top and the bottom side of its vertex grid to one copy of $\hminus$ each. For each side---top and bottom---of the $\hplus$-chain, the copies of $\hminus$ will be connected by $k-1$ non-crossing edges in $F$ as schematically shown in~\Cref{fig:variable-gadget}. For an illustration showing all vertices and edges, see~\Cref{fig:representations}(a).

For each variable gadget, its edges in $F$ (bold red in~\Cref{fig:variable-gadget}) then consist of alternating 
$3$- and $1$-spanning edges.
Formally, for each $i \in \{0, \dots a-1\}$, the path $F$ passes through the vertices
$u_{4i + 1}$, $u_{4i + 4}$, $u_{4i + 3}$, $u_{4i + 6}$, $u_{4i + 5}$,
except for $a-1$, where we omit the last vertex. 

For the remainder, we depict literal edges representing the value $\ourtrue$ in blue and the ones representing $\ourfalse$ in orange, while the edges in $F$ are colored in red; c.f.~\Cref{fig:variable-gadget,fig:clause-gadget,fig:plane-path,fig:plane-matching}.

\begin{lemma}
    \label{lem:variable}
    Let $v$ be a variable gadget described as above.
    Then, in any $k$-planar drawing containing $v$,
    its literal edges, and the edges $F_v \subseteq F$ incident to vertices in $v$,
    either all negative or all positive literal edges are crossed.
\end{lemma}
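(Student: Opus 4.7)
The plan is to show that in any $k$-plane drawing of $v$, $F_v$, and the literal edges, the forward $3$-spanning edges $(u_{4i+1},u_{4i+4})$, $i\in\{0,\dots,a-1\}$, must all be drawn on the same side of the $\hplus$-chain, and that this single binary choice forces either all positive or all negative literal edges of $v$ to be crossed.

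First I would prove, for each fixed $i$, that the edge $(u_{4i+1},u_{4i+4})$ crosses the positive literal of $u_{4i+3}$ if it is drawn above the chain, and the negative literal if drawn below. The variable endpoint $u_{4i+3}$ lies strictly between $u_{4i+1}$ and $u_{4i+4}$ along the chain in $\Gamma$, so the closed curve consisting of the $3$-spanning edge together with the chain segment from $u_{4i+1}$ to $u_{4i+4}$ bounds a disk in whose boundary $u_{4i+3}$ sits. A Jordan-curve argument applied in $\Gamma$ (which is plane) then shows that any edge leaving $u_{4i+3}$ on the side of the chain where the $3$-spanning edge was routed and terminating in a different layer must exit this disk by crossing the $3$-spanning edge itself, since the rest of the boundary is a planarly drawn part of $G$.

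The main task, and what I expect to be the hard part, is to argue that the choice of side is the same for all $i$. I would proceed by contradiction: assume two consecutive forward $3$-spanning edges $(u_{4i+1},u_{4i+4})$ and $(u_{4(i+1)+1},u_{4(i+1)+4})$ are drawn on opposite sides. Then the intermediate $F_v$-edges---the backward $1$-spanning edges $(u_{4i+4},u_{4i+3})$ and $(u_{4i+6},u_{4i+5})$, the interleaving $3$-spanning edge $(u_{4i+3},u_{4i+6})$, and the two bundles of $k-1$ $\hminus$-connecting edges above and below the chain---must be routed so as to realise this side switch. I expect that the dimensions $(k+1)\times(k+1)$ of the $\hplus$-grids and $(k-1)\times(k-1)$ of the $\hminus$-grids, together with the $k-1$ connecting edges, are calibrated precisely so that in every topologically distinct routing consistent with a side switch, some edge of $G$ or of $F$ is forced to receive more than $k$ crossings. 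The main obstacle is carrying out the case analysis over the possible routings through or around the $\hplus$- and $\hminus$-structures and, in each case, exhibiting a concrete over-crossed edge.

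Combining both steps, all forward $3$-spanning edges lie on the same side of the chain, and applying the first step to every $i$ yields that either all positive or all negative literal edges are crossed, as claimed.
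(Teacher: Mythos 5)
Your first step (the Jordan-curve argument that $(u_{4i+1},u_{4i+4})$ must cross the positive literal edge of $u_{4i+3}$ if routed above the chain and the negative one if routed below, since neither the literal edge nor the $F$-edge may cross the $\hplus$-chain) is fine and matches what the paper uses implicitly. However, the part you yourself identify as the main task --- that all these $3$-spanning edges choose the \emph{same} side --- is exactly where your proposal stops being a proof: ``I expect that the dimensions \dots are calibrated precisely so that \dots some edge is forced to receive more than $k$ crossings'' is a conjecture, not an argument, and you have not exhibited the mechanism that makes a side switch impossible. This is the heart of the lemma, so the gap is genuine.

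The missing idea is a crossing-budget (saturation) argument rather than an exhaustive case analysis of routings. Concretely: (i) no edge of $F$ can cross the $\hplus$-chain, since that costs at least $k+1$ crossings; (ii) each edge $(u_{4i+1},u_{4i+4})$ spans the $\hplus$ copy at position $4i+2$, which is attached above and below to $\hminus$ copies that are themselves linked to their horizontal neighbours by $k-1$ edges of $F$; hence on whichever side it is drawn it is forced to cross that bundle of $k-1$ $F$-edges \emph{and} one literal edge of $u_{4i+3}$, so it already has exactly $k$ crossings; (iii) consequently the interleaving $3$-spanning edge $(u_{4i+3},u_{4i+6})$, whose endpoints lie under the arcs of the two neighbouring edges $(u_{4i+1},u_{4i+4})$ and $(u_{4i+5},u_{4i+8})$, can cross neither of them nor the chain, so it must be drawn on the side opposite to both --- which forces consecutive ``blocking'' edges onto the same side, and by induction all of them. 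Without (ii) and (iii) your contradiction never materialises; note in particular that for $k=1$ the $\hminus$ structures vanish entirely, so no ``calibration of grid dimensions'' can carry the argument --- what saturates the blocking edge there is the single literal-edge crossing, which your plan does not exploit. Also note that for $k\ge 2$, without the saturation count, a case analysis over routings would not obviously close, since a forcing edge could in principle afford to cross a blocking edge that had spare crossing capacity.
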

\begin{proof}
By $k$-planarity, $F$ cannot cross the $\hplus$-chain. We refer to the $3$-spanning edges that cross literal edges as \emph{blocking $3$-spanning edges} and the others as \emph{forcing $3$-spanning edges}. 
For every blocking $3$-spanning edge $e$, there is a copy of $\hplus$ between its endpoints, which connects to a two $\hminus$ copies, one on the top and one on the bottom side. On their respective side, these $\hminus$ copies are in turn horizontally connected to neighboring $\hminus$ copies by $k-1$ edges in $F$. By $k$-planarity, $e$ cannot cross out of surrounding structure. Hence, $e$ crosses the $k-1$ edges connecting to the $\hplus$ between its endpoints either on the top or bottom side of the gadget. 
Since the blocking $3$-spanning edges are additionally crossed by a literal edge, they cannot cross the forcing $3$-spanning edges (by $k$-planarity) and thus all have to lie on the same side. Subsequently, the forcing $3$-spanning edges all lie on the opposite side.
Hence, in the variable gadget, $F$ crosses either all negated literal 
edges of a variable gadget which corresponds to $x_i$ being $\ourtrue$, or vice versa, corresponding to $x_i$ being $\ourfalse$. 
\end{proof}

We think of the variable corresponding to the gadget 
as set to $\ourtrue$ if the negative literal edges are crossed, and to $\ourfalse$ otherwise.
We connect the variable gadgets by adding one copy of $\hplus$ with a $1$-spanning edge added to $F$ in between them; see \Cref{fig:variable-gadget}.

\begin{figure}[t]
\centering
\includegraphics[page=4,width=\linewidth]{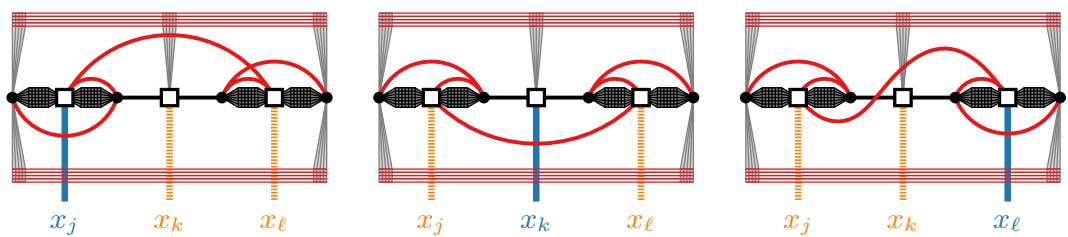}
\caption{Drawing of the clause gadget illustrating~\Cref{lem:clause}.}
\label{fig:clause-gadget}
\end{figure}

\subparagraph{The clause gadget.}  We describe the construction only for the positive clauses; it works symmetrically for the negative ones.
The clause gadget is depicted in~\Cref{fig:clause-gadget}. 
It consists of a chain of two copies of $\hplus$,
followed by two edges,
followed by two more copies of $\hplus$.
We mark the middle vertices of each of the two copies and the two edges as \emph{variable endpoints} and 
add one additional edge to them, their \emph{literal edge}.
Assume that all literal edges are drawn on the same side as shown in~\Cref{fig:clause-gadget} and 
add edges to $F$ as shown in red.
Further, we add three copies of $\hminus$ on the top and two on the bottom side of the gadget, and connect the left- and rightmost vertex in the gadget as well as the middle variable endpoint to the corresponding ones. Similar to the variable gadget, these $\hminus$ copies will be connected via edges in $F$ as shown in~\Cref{fig:clause-gadget}.

\begin{lemma}
    \label{lem:clause}
    Let $c$ be a clause gadget drawn as described above. Then, 
    in any $k$-planar drawing containing $c$, 
    its literal edges, and 
    the edges $F_c \subseteq F$ incident to vertices of $c$,
    at least one literal edge has to be crossed by an edge in $F_c$.
\end{lemma}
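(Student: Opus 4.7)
The plan is to mirror the argument used for the variable gadget in \Cref{lem:variable}. As a first step, I would observe that by $k$-planarity each edge of the two \hplus-chains in $c$ is already saturated by the $F_c$-edges drawn inside the chain, and each surrounding \hminus-copy is saturated by the $k-1$ bridging $F_c$-edges that connect it to its neighbouring \hminus-copies; consequently, no edge of $F_c$ can cross either of the \hplus-chains or escape the gadget through an \hminus-block, so every edge of $F_c$ has to be routed through the interior of $c$.

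In parallel with \Cref{lem:variable}, I would then classify the long spanning edges of $F_c$ as \emph{blocking} if they cross at least one literal edge and \emph{forcing} otherwise. For every such edge, the very same reasoning as in \Cref{lem:variable} applies: it has to cross the $k-1$ bridging edges that join the intermediate \hplus-copy to its adjacent \hminus-copy either entirely on the top or entirely on the bottom of $c$. Since a blocking edge additionally spends one of its $k$ crossing slots on a literal edge, $k$-planarity of the bridging edges forbids a blocking and a forcing long edge from sharing a side.

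Now assume for contradiction that no literal edge of $c$ is crossed by any edge of $F_c$; then every long edge is forcing and all of them are drawn on a common side. This is where the asymmetric placement of the \hminus-copies (three on the top versus two on the bottom) becomes decisive: it fixes the side on which the forcing long edges have to live, and, combined with the presence of the two explicit non-\hplus edges between the two \hplus-chains, it forces the short $F_c$-edges incident to the middle variable endpoint onto the opposite side. A short case analysis over how those middle short $F_c$-edges can be drawn—distinguishing according to whether they bend above or below the middle two edges—shows that in every case one of them is pushed onto the middle literal edge, yielding the required contradiction.

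The main obstacle is precisely this final case analysis. The global strategy is structurally identical to that of \Cref{lem:variable}, but the clause gadget is deliberately asymmetric and contains the two explicit middle edges, so the top/bottom parity argument no longer closes the proof on its own; the routing of the short $F_c$-edges near the middle variable endpoint is the only nontrivial ingredient and requires careful bookkeeping, together with the $k$-planarity budget of the surrounding \hminus-bridges, to verify that each subcase is indeed obstructed by the middle literal edge.
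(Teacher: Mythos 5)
Your high-level plan---port the blocking/forcing side-parity argument of \Cref{lem:variable} and finish with a case analysis near the middle variable endpoint---does not close the proof, and the missing piece is exactly the heart of the argument. The clause gadget is not a symmetric chain like the variable gadget: the part of $F_c$ routed through it is a short zigzag path $e_1,\dots,e_5$ (for instance, $e_2$ spans one $\hplus$ backwards), so the edges of $F_c$ can and must cross \emph{each other} as well as the two single middle edges, and the proof has to track these mutual crossings together with the $(k-1)$-fan at the middle endpoint $v_2$. The paper's proof is precisely this bookkeeping: every literal edge already carries $k-1$ crossings from the bottom subpath, so if all three literals were saturated (all false), $e_1$ is forced to the top side; $e_3$ then gains one crossing (either with $e_1$ or with the left middle edge) plus the $k-1$ crossings of the fan at $v_2$, exhausting its budget and forcing it to attach to $v_3$ from the top; finally $e_5$ must also escape to the top side to avoid $l_3$, which gives $e_3$ a $(k+1)$-st crossing---the contradiction. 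Your sketch never performs this within-$F_c$ accounting; you assert that ``a short case analysis shows that in every case one of the middle edges is pushed onto the middle literal edge,'' which is exactly the claim that needs proof, and you acknowledge yourself that the ported parity argument alone does not suffice.

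In addition, two of the supporting claims you use to set up that parity argument are not justified as stated. Edges of $F_c$ cannot cross an $\hplus$-chain because any curve through a $(k+1)\times(k+1)$ grid must cross more than $k$ of its edges, not because the grid edges are ``saturated by $F_c$-edges drawn inside the chain'' (no edges of $F_c$ are drawn inside $\hplus$). And the role of the top/bottom asymmetry is different from what you describe: the two bottom $\hminus$-copies carry the subpath that gives every literal edge its $k-1$ baseline crossings, while the three top copies provide the $(k-1)$-fans (at the outer vertices and at $v_2$) that consume the budget of the path through the gadget; this asymmetry does not by itself ``fix the side'' of any forcing edge, and the same-side conclusion of \Cref{lem:variable} does not transfer because the endpoints of $e_1,e_3,e_5$ interleave differently in the clause gadget.
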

\begin{proof}
    We will proceed by showing that in a $k$-planar drawing, at least one literal edge connected to $c$ has to be crossed by an edge in $F_c$.
    Let $v_1, v_2, v_3$ be the variable endpoints of the clause gadget $c$ on layer $i$, 
    $u_1, u_2, u_3$ be the corresponding variable endpoints on layer $i-1$, and 
    $l_j = (u_j, v_j)$ $j \in {1,2,3}$ be the three literal edges connecting them. Observe that $l_1, l_2, l_3$ each gain $k-1$ crossings by crossing the $k-1$ edges in $F$ which are connected to two $\hminus$ copies on the bottom side of $c$. We denote by $e_1, \dots, e_5$ the edges of $F$ following the path in the clause gadget starting from the leftmost vertex.
    Assume for a contradiction that $l_1, l_2, l_3$ represent the value $\ourfalse$.
    Then $l_1$ is crossed by $F$ on layer $i-1$. In order to maintain $k$-planarity, it follows that $e_1$ has to be drawn at the top side of the gadget.
    The next edge $e_2$ spans one $\hplus$ to the left, so $e_3$ has two options. Either it stays the top side of the gadget and crosses $e_1$ or it passes through the bottom side, crossing the left single edge incident to $v_2$ to the top side afterwards, since $l_2$ cannot be crossed. In either case, $e_3$ gains one crossing before having to cross the $k-1$-fan incident to $v_2$. As $e_3$ has $k$ crossings afterwards, it has to connect to $v_3$ on the top side of $c$. However, as $l_3$ is already crossed in layer $i-1$, the edge $e_5$ has to cross to the top side in order to avoid the $k+1$-th crossing at $l_3$. This induces $k+1$ crossings at $e_3$, a contradiction to the $k$-planarity of the drawing. It follows that at most two literal edges can remain uncrossed by edges of $F$ passing through the clause gadget, resulting in three valid states of the gadget, each forcing at least one literal to be $\ourtrue$; see~\Cref{fig:clause-gadget}. 
\end{proof}

\subparagraph{Propagating the variable state.} Again, we only describe the construction for layers $> 0$ as the other side is symmetric. 
We insert $\hplus$-chains with $1$-spanning edges added to $F$ on every layer $> 0$ of $\Gamma_\phi$ and insert the clause gadgets into the respective layers as shown in~\Cref{fig:plane-path}. 
Further, we create variable endpoints on all layers $> 0$ in order to propagate the state of the variable gadgets to clauses in higher layers. 
Layer $1$ thereby ensures that each variable endpoint can be connected to another endpoint, even if the respective literal edge is not used in a clause, as this is crucial to ensure the alternating pattern in the variable gadgets; see, e.g., $x_1$ in~\Cref{fig:plane-path}. 
For each pair of corresponding variable endpoints of a variable gadget and clause gadget, we create a variable endpoint at a merged vertex in the $\hplus$-chain in each layer $i$ with $0 < i < L(c)$ and insert \emph{propagating edges}, by prescribing $F$ to span the two neighboring copies of $\hplus$. 
Further, we connect every two consecutive variable endpoints on layer $j$ and $j+1$ with $1 \leq j < L(c)$ via a literal edge, as illustrated in~\Cref{fig:plane-path}. 

Both the variable and the clause gadget require each literal edge to have either $k-1$ or $k$ crossings. Since the \problemnamek{k_{\ge1}} problem requires $\Gamma$ to be plane, it is not possible in our construction to create these crossings with edges in $G$, hence the path formed by edges $F$ has to cross each literal edge $k-1$ times, in addition to one potential crossing by the propagation edges. To this end, we create a subpath $P_{i}$ comprised of edges in $F$ between each layer $i$ and layer $i-1$ (if present), which is passing through copies of $\hminus$ and crosses the literal edges $k-1$ times; see~\Cref{fig:rep-red-path} for different levels of abstraction used in our illustrations. The subpath of $F$ on every layer $i$ is joined to $P_i$ and $P_{i+1}$ (if present) by an $\hplus$-chain and $1$-spanning edges. For $k=1$ we simply connect the subpaths of $F$ on each layer to the next by a$\hplus$-chain and $1$-spanning edges. Note that if $k$ is even, the sides where the $\hplus$-chain is located alternate, otherwise they connect on the same side of the drawing. Note that in our illustrations showing final the constructions for the graph given in~\Cref{fig:planar-monotone-3-sat}, we assume an even $k$. To ensure that the edges in each $P_i$ do not exceed $k$ crossings, we subdivide between each literal edge by inserting vertically connected copies of $\hminus$ as depicted in~\Cref{fig:plane-path}.

\begin{figure}[tbh]
    \centering
    \includegraphics[page=10,width=0.8\linewidth]{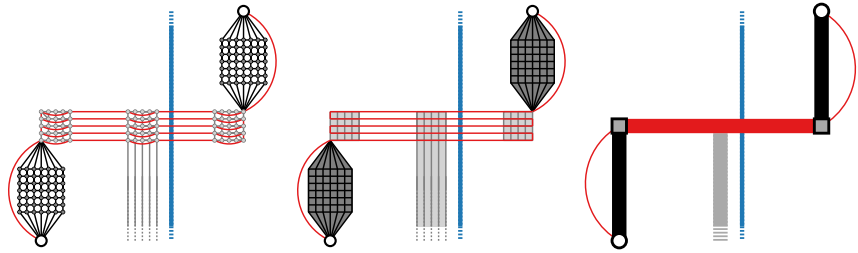}
    \caption{Different representations of alternating path $P_i$ between layer $i$ and layer $i-1$.}
    \label{fig:rep-red-path}
\end{figure}

\begin{lemma}
    \label{lem:propagation}
    Let $P = e_1, \dots, e_{L(c)}$ be a path of literal edges such that 
    $e_1$ is incident to a variable endpoint of variable gadget $v$ and 
    $e_{L(c)}$ to one clause gadget $c$,
    if $e_1$ is crossed in $v$, then
    $e_{L(c)}$ is crossed by an edge of $F$ incident to vertices on layer $L(c)-1$.
\end{lemma}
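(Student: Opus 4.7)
The plan is to prove by induction on $j \in \{1, \ldots, L(c)\}$ the strengthened statement that $e_j$ is crossed exactly $k$ times, and for $j \ge 2$ the $k$-th crossing on $e_j$ is produced by an edge of $F$ incident to a vertex on layer $j-1$. Applied with $j = L(c)$, this directly yields the lemma.

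The base case $j = 1$ uses the hypothesis that $e_1$ is crossed in the variable gadget $v$: by construction the subpath $P_1$ between layer $0$ and layer $1$ crosses $e_1$ exactly $k-1$ times, and the crossing inside $v$ adds a further one, saturating $e_1$ at $k$ crossings in any $k$-plane drawing.

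For the inductive step, let $w$ be the shared variable endpoint of $e_j$ and $e_{j+1}$ on layer $j$. Around $w$, the construction places a $1$-spanning propagating edge $p_j \in F$ whose endpoints are the two outer joining vertices of the two $\hplus$ copies of the layer-$j$ $\hplus$-chain meeting at $w$. Arguing exactly as in the proof of \Cref{lem:variable}, the $\hminus$ copies of $P_j$ and $P_{j+1}$ lying above and below $w$, together with the $k-1$ horizontal edges of $F$ that link them, form a barrier whose crossing quotas are already exhausted; hence $p_j$ admits only two routes around $w$, one crossing $e_j$ below and one crossing $e_{j+1}$ above. Since by the inductive hypothesis $e_j$ already has $k$ crossings, $p_j$ must take the upper route and cross $e_{j+1}$. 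Combined with the $k-1$ crossings contributed by $P_{j+1}$, this gives $e_{j+1}$ exactly $k$ crossings, the last of which is $p_j$, an $F$-edge incident to vertices on layer $j$.

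The main obstacle is making the ``barrier'' claim precise for the propagating edge: one must show, in direct analogy with the blocking/forcing dichotomy of \Cref{lem:variable}, that under $k$-planarity $p_j$ cannot slip past the $\hminus$ copies or the horizontal $F$-edges of $P_j$ and $P_{j+1}$, so that its only two feasible routes are the two symmetric paths around $w$. Once this geometric claim is established, the crossing count and the induction itself are routine.
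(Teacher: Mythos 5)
Your induction is, in substance, the same argument as the paper's: saturate $e_1$ at $k$ crossings using the $k-1$ crossings from the alternating path plus the crossing inside the variable gadget, and then propagate layer by layer via the forced choice of the layer-$j$ propagating edge between $e_j$ and $e_{j+1}$. The paper phrases the propagation informally rather than as an explicit induction, but the mechanism is identical, so this is not a different route.

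There is, however, a gap you have placed in the wrong spot. You take the statement that $P_1$ (and later $P_{j+1}$) crosses the literal edge exactly $k-1$ times as holding ``by construction.'' It does not: only the drawing $\Gamma$ of $G$ is fixed, while the edges of $F$ are the objects being inserted, so nothing in the construction forces an edge of $P_j$ to cross the literal edge at all --- a priori it could be routed around one of the literal edge's endpoints. This is precisely the part of the paper's proof that carries an actual argument: an edge of $P_j$ with one endpoint in the \hminus copy to the left of $e_{j+1}$ and the other in the copy to its right, if it avoided $e_{j+1}$, would have to arc around an endpoint of $e_{j+1}$; that endpoint lies on an \hplus-chain, and crossing an \hplus copy already costs more than $k$ crossings, a contradiction. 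So the claim is true, but ``by construction'' is not a valid justification, and your proof as written is missing exactly the lemma's geometric content. The second item you do flag yourself --- the two-route dichotomy for the propagating edge $p_j$ --- is asserted in the paper with comparable brevity (``the path of $F$ passing through layer $j$ either crosses $e_j$ or $e_{j+1}$''), and its justification is the same kind of barrier argument: any alternative routing of $p_j$ would have to cross an \hplus-chain or the \hminus structure together with its connecting $F$-edges, as in \Cref{lem:variable}. In short, both missing pieces reduce to the single fact that no edge of $F$ can pass through an \hplus copy or escape the surrounding \hminus barrier; once you prove that (rather than appeal to the construction), your induction goes through and coincides with the paper's proof.
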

\begin{proof}
    Let $F_j \subset F$, be the edges in the $k-1$-alternating path between layer $j$ and $j+1$. We first show that the edges in $F_j$ induce $k-1$ crossings on the literal edge $e_{j+1}$. Let $e \in F_j$ be an edge incident to a vertex in the $\hminus$ copy to the left side and to another vertex in the $\hminus$ copy to the right side of $e_{j+1}$. Assume for a contradiction, that $e$ does not cross $e_{j+1}$. It follows that $e$ is drawn as an arc around one of the endpoints of $e_{j+1}$, say $v$, of the literal edge. However, as $v$ lies on a chain of $\hplus$ copies, $e$ necessarily crosses this chain; a contradiction to $k$-planarity. It follows that each literal edge $e_{j+1}$ has exactly $k-1$ crossings with $F_j$. 
    The path $F$ passing through layer $j$ either crosses the literal edge $e_j$ or the literal edge $e_{j+1}$. However, if $e_1$ is crossed by $F$ in the variable gadget, it has $k$ crossings in total, so $F$ is forced to cross $e_2$ passing through layer $1$, propagating the state of the variable gadget to the next layer, such on layer $j$, $F$ has to cross $e_{j+1}$. It follows that every internal vertex of $P$ acts as a variable endpoint for the next layer. Hence, if $e_1$ is crossed by $F$ passing through the variable layer, the edge $e_{L(c)}$ at the clause gadget $c$ is crossed by $F$ in layer $L(c) - 1$.
\end{proof}

Note that the first edge of $P$ being uncrossed in the variable gadget 
does not necessarily lead to its last edge being uncrossed in layer $L(c)-1$.
In fact, this is possible when multiple literals evaluate to $\ourtrue$ for a clause gadget; e.g., the top-left orange edge in \Cref{fig:plane-path}.

\begin{figure}[tbh]
    \centering
    \includegraphics[page=5,width=0.9\linewidth]{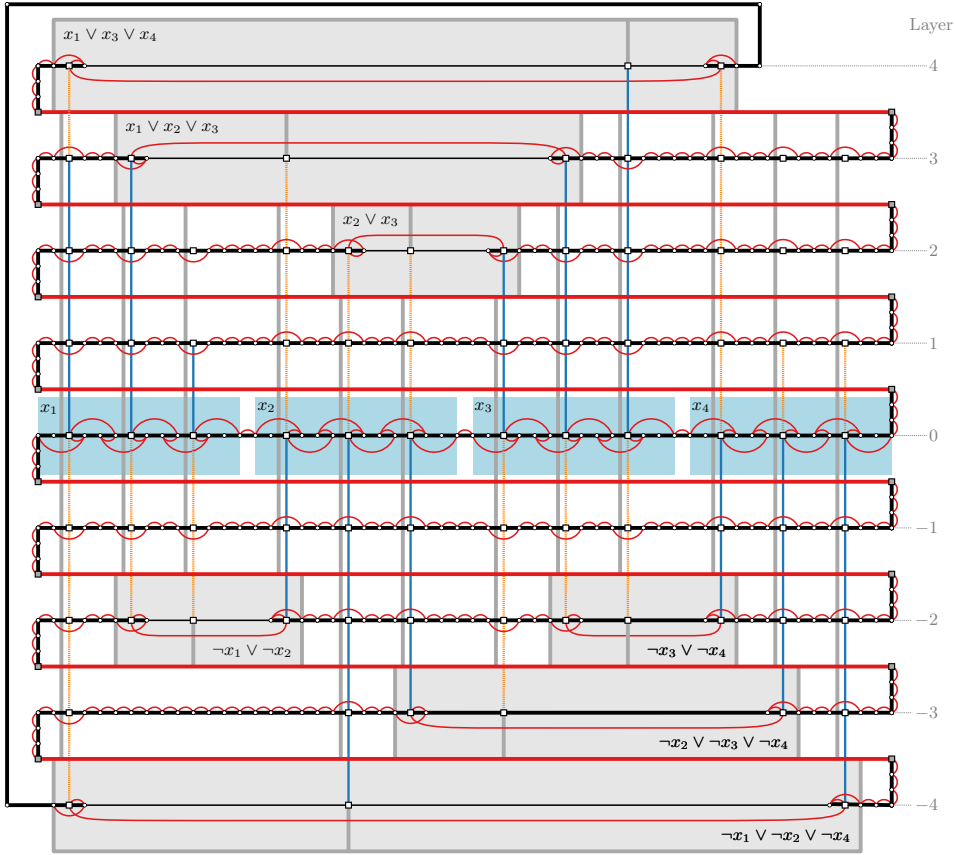}
    \caption{Solution (in red) of the \problemnamek{k} instance coming from the graph given in~\Cref{fig:planar-monotone-3-sat}.}
    \label{fig:plane-path}
\end{figure}

\begin{theorem}\label{thm:biconnected}
    \problemnamek{k} is \NP-complete for every $k\ge 1$, even if $G$ is biconnected and
    $F$ forms a path.
\end{theorem}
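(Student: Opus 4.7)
The plan is to assemble the three lemmas \Cref{lem:variable,lem:clause,lem:propagation} into a polynomial-time reduction from \textsc{Planar Monotone 3-SAT}, and then verify the two structural claims of the theorem (biconnectedness of $G$ and $F$ being a single path). Membership in \NP is immediate: given a candidate drawing obtained by routing the edges of $F$ into $\Gamma$, one verifies in linear time that every edge has at most $k$ crossings. For the reduction, given $\phi$ with rectilinear embedding $\Gamma_\phi$, we construct $(G,\Gamma,F)$ exactly as described above: a variable gadget for each variable bar, a clause gadget for each clause bar, a literal edge for every variable-clause incidence, plus the propagation chains and the side $\hminus$-grids on every layer. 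Since each gadget has size $O(k^2)$ and the total number of gadgets and literal edges is $O(n+m)$, the construction has size polynomial in $|\phi|$ and $k$, and can be carried out in polynomial time.

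For correctness, I would prove both implications. If $\phi$ has a satisfying assignment $\alpha$, I construct a valid $k$-plane insertion by routing, within each variable gadget for $x$, the $3$-spanning edges of $F$ on the top if $\alpha(x)=\ourtrue$ and on the bottom otherwise; by \Cref{lem:variable} this correctly ``crosses out'' all literal edges carrying the value $\ourfalse$. On every intermediate layer I route the propagation path $F$ on the side dictated by the variable state, as allowed by the argument of \Cref{lem:propagation}. Finally, inside each clause gadget I pick one satisfied literal and route the local subpath so as to cross only the literal edges of the falsified literals, using one of the three valid states depicted in \Cref{fig:clause-gadget}; the $\hminus$-grids absorb the mandatory $k-1$ forced crossings on every literal edge. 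A careful check gadget by gadget shows no edge is crossed more than $k$ times. Conversely, given any $k$-plane insertion of $F$, \Cref{lem:variable} extracts from each variable gadget a truth value $\alpha(x)$, \Cref{lem:propagation} propagates this value upward/downward along the chain of literal edges, and \Cref{lem:clause} forces at least one literal incident to each clause gadget to evaluate to $\ourtrue$ under $\alpha$, so $\alpha$ satisfies $\phi$.

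It then remains to verify the two structural refinements. For biconnectedness, I would argue that every $\hplus$ copy, every $\hminus$ grid, and every variable/clause gadget is $2$-connected, and that the gluings between neighbouring copies (via $k+1$ grid-side vertices on each join and via the vertices $u,v$ that anchor the $\hplus$-chains) produce two vertex-disjoint paths between any pair of vertices in the combined graph; since $\Gamma_\phi$ itself is connected, the whole graph $G$ is biconnected. For the path property of $F$, I would trace the edges of $F$ in order: within a variable gadget they form the explicit alternating $3/1$-spanning sequence described after \Cref{fig:variable-gadget}; consecutive variable gadgets are joined by the single $1$-spanning $\hplus$-bridge; the propagation layers and the interiors of clause gadgets each contribute one contiguous subpath of $F$, stitched to neighbouring layers via the $\hplus$-chains described around \Cref{fig:rep-red-path}. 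Concatenating these pieces in the order prescribed by the left-to-right, bottom-to-top traversal of $\Gamma_\phi$ yields a single simple path, and no vertex of $G$ has $F$-degree more than two.

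The main obstacle I anticipate is exactly this last bookkeeping: the alternating side-switching of the $\hplus$-chains (which depends on the parity of $k$) and the fact that literal edges inside clause gadgets are not in $F$ but the propagation paths and the $\hminus$-side grids are, means one has to carefully specify how the local subpaths of $F$ inside gadgets and layers are glued so that endpoints match up into a single path and nowhere a cycle or a branch is created. I would handle this by giving an explicit left-to-right ordering of the gadgets within each layer and a top-to-bottom ordering of layers, and by verifying at each interface that the two incident $F$-subpaths share exactly one endpoint. Once these structural checks are in place, the theorem follows from the three gadget lemmas.
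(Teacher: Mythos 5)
Your proposal follows essentially the same route as the paper: it builds the identical gadget construction and combines \Cref{lem:variable}, \Cref{lem:propagation}, and \Cref{lem:clause} in the same two directions as the paper's proof, merely making explicit the NP membership, polynomial size, biconnectedness, and single-path structure of $F$ that the paper leaves implicit in the construction. One minor slip: under the paper's conventions the blocking $3$-spanning edges are drawn \emph{below} the variable gadget when $\alpha(x)=\ourtrue$ (so that the downward, negative literal edges are the crossed ones), whereas you say ``on the top if $\alpha(x)=\ourtrue$''—this contradicts your own (correct) goal of crossing exactly the literal edges carrying $\ourfalse$, but it is only a labeling inversion, not a conceptual gap.
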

\begin{proof}
Assume that $\phi$ is a yes-instance, i.e., there exists an assignment for the variables such that at least one literal in every clause is satisfied. We obtain a $k$-planar drawing of our construction as follows. For every variable gadget, we draw the blocking $3$-spanning edges to be above the gadget if the variable is $\ourfalse$, and below otherwise. On the intermediate layers, we propagate the state of the variable gadget as described.
Since each clause is satisfied, for each clause, one of the three valid states of the clause gadget can be chosen to obtain a $k$-planar drawing. 

Now assume that our construction is a yes-instance, i.e., there exists a $k$-planar drawing of $(V,E\cup F)$  that contains $\Gamma$ as a subdrawing. 
As argued above, at each variable gadget, all blocking $3$-spanning edges must be drawn on the same side. 
We assign each variables a $\ourfalse$ value if the blocking $3$-spanning edges on the corresponding variable gadget are drawn above, and $\ourtrue$ otherwise. 
For the clause gadgets, as argued above, one of the literal edges has to be crossed by the part of $F$ that passes through the gadget. Because of the variable propagation gadget, this is only possible if the literal edge at the corresponding variable gadget is uncrossed by $F$ in the variable gadget; hence, the literal evaluates to $\ourtrue$. Thus, this variable assignment is a feasible solution for $\phi$.
\end{proof}

Note that the construction for \problemname naturally becomes much simpler, as the (gray) axis-aligned grid graph $H^-$ consists of 0 vertices; see \Cref{fig:plane-1-path,fig:plane-1-matching}.

\begin{figure}[tbh]
    \centering
    \includegraphics[page=11,width=0.8\linewidth]{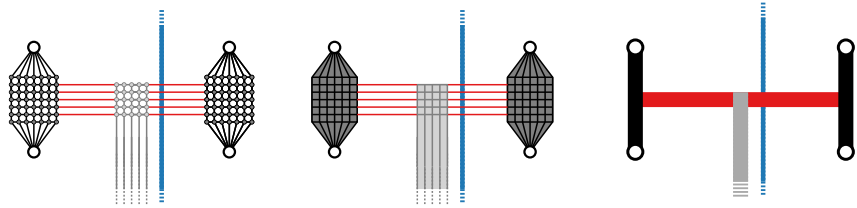}
    \caption{Different representations of the $k-1$ matching edges which substitute the alternating path $P_i$ in the case that $F$ is a matching.}
    \label{fig:rep-red-matching}
\end{figure}
\begin{figure}[tbh]
    \centering
    \includegraphics[page=6,width=0.8\linewidth]{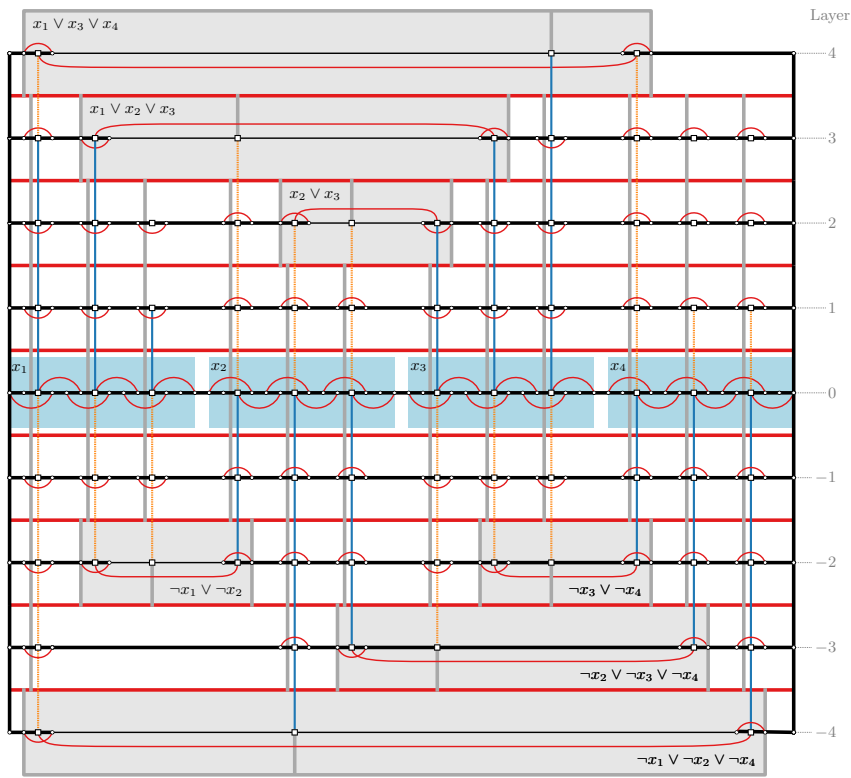}
    \caption{\problemnamek{k} instance where $F$ is a matching reduced from 
 the graph given in~\Cref{fig:planar-monotone-3-sat}}
    \label{fig:plane-matching}
\end{figure}

We can use essentially the same construction, but replace the alternating connections
between the layers by single edges to prove NP-hardness also for the case that $F$ is a matching; 
see \Cref{fig:plane-matching,fig:rep-red-matching}.

\begin{corollary}\label{cor:matching}
    \problemnamek{k} is \NP-complete for every $k\ge 1$, even if $G$ is biconnected and
    $F$ forms a matching.
\end{corollary}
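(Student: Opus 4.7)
The plan is to reuse the entire reduction from the proof of Theorem~\ref{thm:biconnected}, but to replace every path segment of $F$ by a collection of independent matching edges, and then reverify the three supporting lemmas in this modified construction.

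First, I would modify the three kinds of gadgets. In each variable gadget, the alternating $3$-spanning/$1$-spanning path would be replaced by independent matching edges: each blocking $3$-spanning edge and each forcing $3$-spanning edge becomes its own matching edge, and the $1$-spanning connectors that merely linked consecutive path edges are dropped. In each clause gadget, the $5$-edge path is replaced by matching edges placed so that each $\hplus$ copy in the gadget is still crossed by exactly one edge of $F$. Most importantly, each alternating inter-layer connection $P_i$ is replaced by $k-1$ independent matching edges (as in Figure~\ref{fig:rep-red-matching}), where each edge spans a pair of adjacent $\hminus$ copies and crosses the literal edge between them exactly once, so that together they still produce the required $k-1$ crossings on every literal edge.

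Next, I would verify that the analogues of Lemmas~\ref{lem:variable}, \ref{lem:clause}, and \ref{lem:propagation} continue to hold. The key point is that every one of these lemmas rests only on $k$-planarity applied locally to each individual edge together with the topological walls imposed by the surrounding $\hplus$ and $\hminus$ structures; none of them actually uses the fact that edges in $F$ form a single connected path. In particular, the side-selection property of the variable gadget (Lemma~\ref{lem:variable}) still holds, because each blocking $3$-spanning edge is topologically forced to detour around its $\hminus$ neighborhood on either the top or the bottom side, and by $k$-planarity a blocking edge (already crossed once by a literal edge) cannot cross a forcing edge; hence all blocking edges sit on the same side and all forcing edges on the opposite one. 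The propagation lemma is essentially unchanged because each of its matching edges independently contributes one crossing to the literal edge it intersects.

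The hard part will be re-proving Lemma~\ref{lem:clause}, since the original argument chains together the routings of the consecutive path edges $e_1,\dots,e_5$ to derive a contradiction when all three literals are $\ourfalse$. In the matching version, this chain of implications must be replaced by purely local arguments; where necessary, I would introduce additional $\hminus$ walls between consecutive clause-gadget edges so that the routing of each matching edge is individually pinned down by $k$-planarity, and then verify by case analysis that at least one literal edge must still be crossed. Once the three lemmas are established, the equivalence between satisfying assignments of $\phi$ and $k$-plane drawings of the constructed instance follows by the same two directions as in Theorem~\ref{thm:biconnected}, and NP-completeness of \problemnamek{k} for matching $F$ then follows immediately, since the construction remains biconnected and of polynomial size.
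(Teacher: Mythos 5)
Your proposal takes essentially the same route as the paper: the paper likewise keeps the whole reduction of \Cref{thm:biconnected} and only replaces the path connections of $F$ by vertex-disjoint edges, in particular substituting each inter-layer alternating path $P_i$ by $k-1$ independent matching edges (cf.\ \Cref{fig:rep-red-matching,fig:plane-matching}), and re-reads the gadget lemmas for this modified instance. Your deferral of the clause-gadget case analysis (and the possible extra \hminus walls) is no less detailed than the paper's own one-remark-plus-figures treatment, so the proposal is consistent with the published argument.
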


\begin{figure}[t]
    \centering
    \includegraphics[page=8,width=0.8\linewidth]{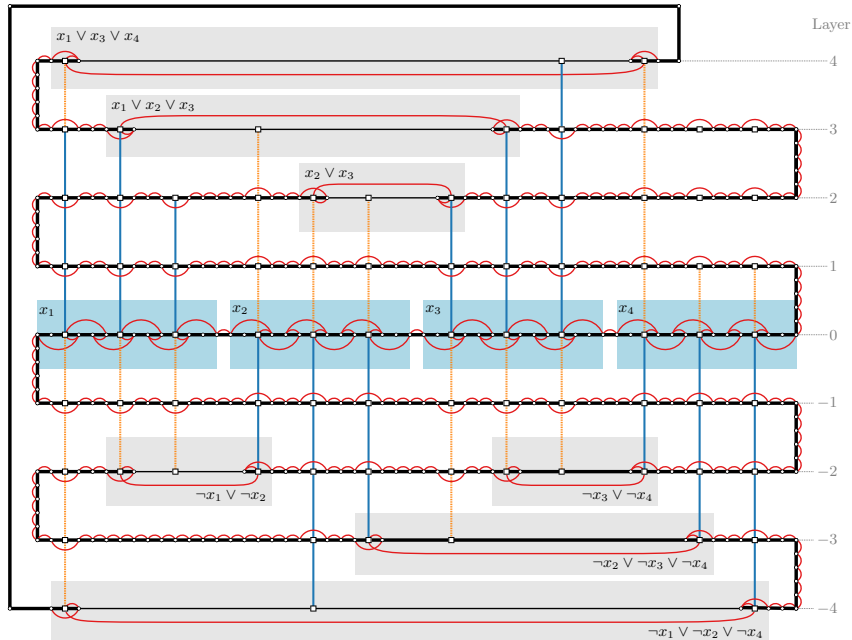}
    \caption{Solution (in red) of the \problemname instance coming from the graph given in~\Cref{fig:planar-monotone-3-sat} for $F$ being a path.}
    \label{fig:plane-1-path}
\end{figure}

\begin{figure}[b]
    \centering
    \includegraphics[page=9,width=0.8\linewidth]{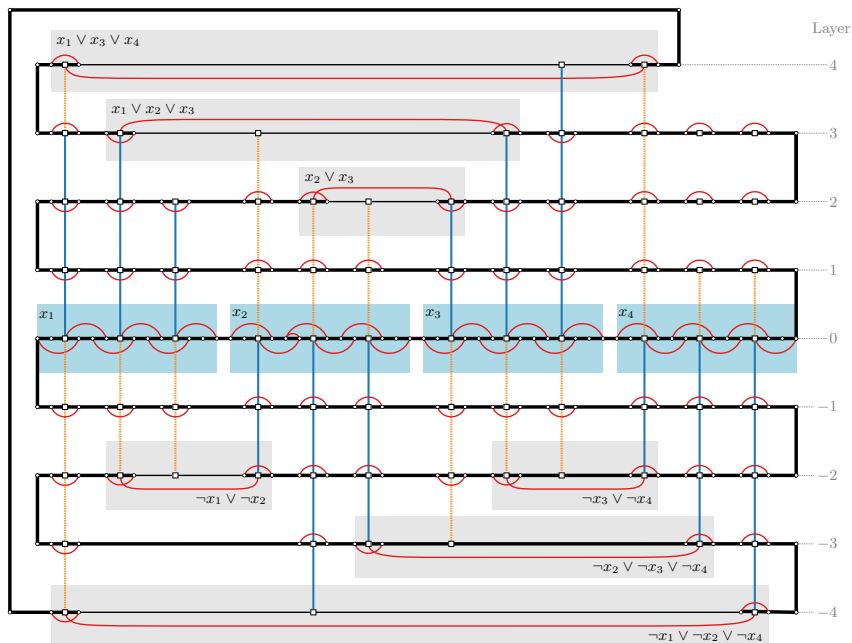}
    \caption{Solution (in red) of the \problemname instance coming from the graph given in~\Cref{fig:planar-monotone-3-sat} for $F$ being a matching.}
    \label{fig:plane-1-matching}
\end{figure}
 
\clearpage
\section{Conclusion}
We introduced the \problemnamek{k} problem and 
showed that it is \NP-complete for every $k\ge 1$ even when the given graph is biconnected and
the inserted edges form a path or matching.
We also presented a linear-time algorithm for \problemname when the 
given graph is triangulated.
This naturally raises the question if the triconnected case of
\problemname is also polynomial-time solvable.

\bibliographystyle{plainurl}
\bibliography{references_format}
\end{document}